\documentclass[submission,copyright,creativecommons]{eptcs}
\providecommand{\event}{ICE 2026} 
 
\usepackage{iftex}
 \usepackage{amsmath}
 \usepackage{amsthm}
 \usepackage{amssymb}
 \usepackage{listings}
 \usepackage{graphicx}

 \newtheorem{definition}{Definition}
  \newtheorem{proposition}{Proposition}
   \newtheorem{theorem}{Theorem}
\theoremstyle{remark}
\newtheorem{remark}{Remark}

\ifpdf
  \usepackage{underscore}         
  \usepackage[T1]{fontenc}        
\else
  \usepackage{breakurl}           
\fi

\title{Modelling Shared-Space Coordination in mCRL2:\\ a Bach-to-mCRL2 Translation Framework}

\author{Corentin Reuther  \qquad Jean-Marie Jacquet
\institute{Nadi Research Institute\\ Faculty of Computer Science, University of Namur\\ Belgium}
\email{\{corentin.reuther,jean-marie.jacquet\}@unamur.be}}

\def\titlerunning{Modelling Shared-Space Coordination in mCRL2}
\def\authorrunning{C. Reuther \& J.-M. Jacquet}
\begin{document}
\maketitle

\begin{abstract}

Although significant research has focused on the theory and implementation of data-based coordination languages, the critical aspect of their automated verification using model-checking techniques remains underexplored, which is essential for ensuring reliability and correctness in distributed systems. 
Existing tools, such as Anemone, provide a solid foundation for reachability-based verification of Bach programs.
While they effectively analyze properties expressed in terms of state attainability, extending support to more expressive temporal specifications—such as liveness properties or invariants over shared space contents—remains an open opportunity. Such extensions are important to capture comprehensive system behaviors, for instance, ensuring that “a request is always matched by a response” or that “no message is silently lost”.
Addressing this limitation, we propose an automated translation from Bach to mCRL2, which explicitly represents the shared space, 
thereby enabling the use of mCRL2's $\mu$-calculus model checker to verify complex properties beyond simple reachability.
Complementing this translation, we introduce a systematic
method to analyze the shared space in mCRL2 concerning data reachability, by directly expressing properties over the shared space's contents in the $\mu$-calculus, thus providing a clearer framework for verification. This approach enables a novel verification process that combines action-based properties with state-based properties over the shared space contents, a largely unexplored area in current coordination-language verification approaches, offering a new dimension of analysis.
\end{abstract}

\section[Introduction and Related Work]{Introduction and Related Work}\label{intro}

In the era of distributed computing, systems confront unprecedented challenges in managing interactions between autonomous components. These systems, where human and technical elements coexist, often exhibit emergent behaviors that can lead to failures if not properly coordinated.
For instance, in a microservices setup, a single faulty interaction can cascade into system-wide service disruptions, as illustrated by a major Amazon Web Services outage in October 2025~\cite{reutersAWS2025}.
In light of this complexity, it appears imperative to use robust coordination models that separate computation and interaction, in order to reason and verify the behavior of systems in a formal way.

The concept of coordination languages originated with Gelernter and Carriero, who introduced Linda in 1985~\cite{gelernter1985generative} as a generative communication model based on a shared tuple space — a logical blackboard where processes deposit and retrieve data without direct coupling~\cite{carriero1989linda,gelernter1992coordination}.


Bach~\cite{BaJa-ICE23, DJL18, BaJa-Anemone-21,JL07}, developed at the University of Namur, is directly inspired by Linda and the algebra of communicating processes (ACP)~\cite{bergstra1985algebra} to propose a modern coordination language. It combines ACP operators (sequential \textit{$;$}, parallel \textit{$||$}, non-deterministic choice \textit{$+$} operators) with Linda-like primitives (\textit{tell}, \textit{get}, \textit{ask}, \textit{nask}) operating on a shared space.

While Bach excels at modeling such interactions, verifying its behavioral properties, such as fairness or absence of losses, has proved challenging with the built-in model checker of Anemone~\cite{BaJa-Anemone-21}, a verification platform developed for Bach. Writing complex temporal properties in this environment turned out to be cumbersome, as the tool primarily focuses on state reachability within the shared space.

This limitation prompted the exploration of mCRL2~\cite{groote2014modeling}, a process-algebra-based tool that employs the $\mu$-calculus for rigorous verification and is equipped with state-of-the-art model checking techniques. While Groote and Mousavi~\cite{groote2014modeling} provide a comprehensive framework for modelling and verifying channel-based communicating systems, Bach processes communicate through a shared space whose contents must themselves be subject to verification — an aspect not addressed by standard mCRL2 methodology.

The question of formally verifying coordination languages based on shared spaces is not new, but remains largely open in its most expressive forms. De Nicola et al.~\cite{de2005formal} tackle a related challenge with STOcKLAIM, a stochastic extension of KLAIM~\cite{de2002klaim}: they model action durations as exponential distributions, translate systems into Continuous Time Markov Chains, and verify quantitative properties such as the probability of reaching a given state within a time bound. This is a genuinely different angle — their concern is \emph{how fast} things happen, while ours is \emph{whether} they happen at all. More fundamentally, their atomic propositions track process locations, not data presence in a shared space. To our best knowledge, directly reasoning over the \emph{contents} of a shared space as part of a temporal verification framework remains unexplored 
— and this is precisely the gap we begin to explore.

This paper develops an automated tool for this translation, available as an artefact in~\cite{repo}. The tool makes the shared space explicit as a parallel process in mCRL2 and encodes Bach primitives as actions, enabling mCRL2 to effectively model Linda-like coordination structures. A key contribution of this paper is extending mCRL2's $\mu$-calculus proofs to incorporate shared content alongside actions, enabling content-aware verification (e.g., querying the presence of specific elements or data values in the shared space) beyond standard action-based checks.
More precisely, the objectives of our work are twofold: on the one hand, to describe the Bach-to-mCRL2 translation, highlighting how observable behavior is preserved and the shared space is explicitly modelled; and, on the other hand, to lay the groundwork for verification of temporal properties, combining action-based and state-based reasoning, preparing for future $\mu$-calculus proofs over shared space contents.

The remainder of this paper is organized in five sections. Section~\ref{sec:Bach} introduces the Bach coordination language, its primitives, composition operators, and trace semantics. Section~\ref{sec:bachTomCRL2} presents the Bach-to-mCRL2 translation, including the explicit encoding of the shared space, the mapping of primitives to actions, and the trace correspondence theorem. Section~\ref{sec:verif} introduces the verification framework, recalling the temporal operators of Anemone and the $\mu$-calculus fragment of mCRL2, and presenting the systematic translation of Bach temporal properties into $\mu$-calculus formulas, including the content-aware approach based on Blackboard instrumentation. Section~\ref{sec:usecase} illustrates the approach with a synchronous Load Balancer case study, exploring how the translation enables the verification of coordination properties combining action-based and content-aware reasoning. Finally, Section~\ref{sec:conclusion} concludes with a discussion of the results and perspectives for future work.

\section[Bach Coordination Language]{Bach Coordination Language: Syntax and Semantics}\label{sec:Bach}
    
    In order to make the paper as self-contained as possible, taking inspiration from \cite{BaJa-ICE23,barkallahsocio,DJL18,BaJa-Anemone-21,JL07}, this section presents the key features of the Bach coordination language, providing the necessary foundation for its translation to mCRL2 and formal verification. Our goal is to define its syntax, operational semantics, and the observable behavior of agents through traces, distinguishing clearly between observable actions (Definition~\ref{def:obs-actions}) and structural constructs (Definition~\ref{def:struct-constructs}), which are formally defined later in this section.
    
    Combining Linda's generative communication model with the Algebra of Communicating Processes (ACP), Bach enables agents to interact exclusively through a \emph{shared space}. Agents use primitives for insertion, removal, and observation, allowing a decoupled approach that is ideal for modeling asynchronous interactions in distributed systems.

    \emph{Scope:}
        This work addresses the core Bach coordination language without extensions such as active data (Multi-Bach), AnimBach, or guarded lists, which address different concerns. The core primitives (\texttt{tell}, \texttt{ask}, \texttt{nask}, \texttt{get}) combined with ACP operators form a sufficient base for distributed coordination verification in mCRL2. The syntax and semantics presented afterwards are adapted from the formal definition in~\cite{barkallahsocio} for the scope and notational conventions of this paper; the underlying language and its semantics are unchanged.

    \subsection{Syntax}

        The syntax of Bach is structured in five conceptual layers, ranging from data representation to process composition. While these layers provide a syntactic view of the language, our presentation is intentionally oriented towards its operational interpretation, as developed in Section~\ref{sec:operationalSemantic}. In particular, rather than focusing solely on grammatical well-formedness, we emphasize the constructs that have a direct impact on execution and observable behavior, which will be essential for the definition of trace semantics and the correspondence theorem (Theorem~\ref{thm:trace}).

        This layered presentation is therefore not purely syntactic: it reflects the structure of the operational semantics by separating data, communication primitives, control structures, and agent-level composition constructs.
        This separation will play a central role in the definition of traces and in the translation into mCRL2, where only primitive communication actions are retained as observable behavior.

        \paragraph{Data layer.}

            The data layer of Bach distinguishes two kinds of identifiers and a notion of structured information term:
            \begin{equation}\label{eq:data-layer}
            \begin{aligned}
            \textit{lid} \ &::=\ \text{string of letters and digits starting with a lowercase letter}\\
            \textit{uid} \ &::=\ \text{string of letters and digits starting with an uppercase letter}\\
            \textit{si} \ &::=\ \textit{lid} \mid 
            \textit{lid}(\textit{si}_1, \ldots, \textit{si}_n)
            \end{aligned}
            \end{equation}

            A \textit{lid} (lowercase identifier) used on its own is called a \textbf{token} in the literature and represents atomic data (e.g., \texttt{request}, \texttt{order}). When a \textit{lid} occurs as the head of a structured term $\textit{lid}(\textit{si}_1, \ldots, \textit{si}_n)$, we call it a \textbf{functor}.
            A \textit{uid} (uppercase identifier) denotes an identifier used as a procedure name (e.g., \texttt{ProducerAgent}, \texttt{LoadBalancer}, see Section~\ref{sec:syntagent}).
            A \textit{si} (\emph{si-term}, short for \emph{structured information term}) is therefore either a token, or a structured term obtained by applying a functor to a sequence of si-terms (e.g., \texttt{request(producer, consumer, id, content)})

            \begin{remark}
                In the remainder of this paper, we frequently refer to the notion of a \emph{shared space}, which is modelled as a multiset of structured information terms (si-terms).
                In the coordination literature, this abstraction is often referred to as a \emph{tuple space}, and si-terms are commonly called \emph{tuples}. These notions are considered equivalent in this work.
            \end{remark}

        \paragraph{Communication primitives.}
        Bach provides four basic communication primitives (\texttt{cprim}) operating on si-terms:

        \begin{equation}
        \textit{cprim} \ ::=\ \texttt{tell}(\textit{si}) \mid 
        \texttt{ask}(\textit{si}) \mid \texttt{nask}(\textit{si}) \mid 
        \texttt{get}(\textit{si})
        \end{equation}

        Their meanings: \texttt{tell(si)} inserts an occurrence of \textit{si} into the shared space (always succeeds); \texttt{ask(si)} checks for the presence of a matching \textit{si} (blocks if absent); \texttt{nask(si)} checks for the absence (blocks if present); and \texttt{get(si)} removes a matching \textit{si} (blocks if absent).

        \paragraph{Control structures.}
        Conditions are built from comparisons over si-terms and combined with Boolean operators:
        \begin{equation}
            \begin{aligned}
            \textit{compop} \ &::=\ <\ \ \mid\ \ \leq\ \ \mid\ \ >\ \ \mid\ \ \geq\ \
            \mid\ \ =\ \ \mid\ \ \neq\\
            \textit{boolop} \ &::=\ \&\ \ \mid\ \ |\  \ \mid\ \ !\\
            \textit{c} \ &::=\ \textit{si}_1\ \textit{compop}\ 
            \textit{si}_2\ \ \mid\ \ c_1\ \textit{boolop}\ c_2
            \end{aligned}
        \end{equation}

        \paragraph{Agents and Procedures.}\label{sec:syntagent}
        Agents are the active entities of Bach. They are built from communication primitives and structural constructs (Definition~\ref{def:struct-constructs}). Communication primitives correspond to observable actions (Definition~\ref{def:obs-actions}) on the shared space, while structural constructs define how agents are composed and controlled.
        The grammar of agents is:

        \begin{equation}
        \label{eq:agents}
        \textit{ag} \  ::=  \begin{array}[t]{l}
        \textit{cprim} \ \
        \mid \ \ \textit{ag}_1\,;\,\textit{ag}_2 \
        \mid \ \ \textit{ag}_1 \parallel \textit{ag}_2 \ \
        \mid \ \ \textit{ag}_1 + \textit{ag}_2 \ \ \mid \\
        \quad \quad c \rightarrow \textit{ag}_1 \diamond \textit{ag}_2 \ \
        \mid \ \ \texttt{sum}\ lid\ \texttt{in}\ D :\ \textit{ag} \ \
        \mid \ \ \textit{pn}(\textit{si}_1, \ldots, \textit{si}_n)
        \end{array}
        \end{equation}

        The composition operators have their usual meaning. Sequential composition $\textit{ag}_1 \,;\, \textit{ag}_2$ runs the first agent and then the second; parallel composition $\parallel$ allows interleaved execution; non-deterministic choice $+$ selects one of the two branches.

        In the guarded choice $\textit{c} \rightarrow \textit{ag}_1 \diamond \textit{ag}_2$, the condition $\textit{c}$ may refer to variables already in scope; if it evaluates to true, $\textit{ag}_1$ executes, otherwise $\textit{ag}_2$. The quantification $\texttt{sum}\ \textit{lid}\ \texttt{in}\ D :\ \textit{ag}$ ranges over all values of the finite domain $D$ (Section~\ref{sec:domain}) assigned to $\textit{lid}$, and behaves as a non-deterministic choice between the corresponding instantiations of $\textit{ag}$ 
        Lastly, the grammar includes procedure calls $\textit{pn}(\textit{si}_1, \ldots, \textit{si}_n)$, which invoke the procedure named $\textit{pn}$ by substituting the si-term arguments $\textit{si}_1, \ldots, \textit{si}_n$ for its parameters and executing its body.
        
        A procedure is introduced by a \emph{procedure equation} of the form:
        \begin{equation}\label{eq:proc-decl}
        \textit{pn}(\textit{si}_1, \ldots, \textit{si}_n) = \textit{ag}
        \end{equation}

        \noindent where \textit{pn} is an uppercase identifier (a \textit{uid}) denoting the procedure name, and the $\textit{si}_i$ are parameters in scope inside the body $\textit{ag}$. The keyword \texttt{proc} is used at the top level of a Bach specification to introduce one or more such procedure equations.

        Procedures are not agents themselves, but definitions that can be invoked by agents through procedure calls. As shown in the grammar of agents (Equation~\ref{eq:agents}), procedure calls may appear inside agents and support recursive definitions.

        \paragraph{Domains.}\label{sec:domain}

        In this paper, domains are treated as finite sets of constants. For instance, in the case study (Section~\ref{sec:usecase}), we define \texttt{MyInteger = \{zero, one, two\}}.
        In the full algebraic specification of Bach~\cite{barkallahsocio}, such a domain is declared using a \texttt{sort} declaration (e.g., \texttt{sort MyInteger = struct zero | one | two}), together with optional \texttt{map} and \texttt{eqn} declarations to define functions over these data types.
        More generally, \texttt{sort} introduces new data types, \texttt{map} defines function symbols over these types, and \texttt{eqn} provides their defining equations.
        
        In this work, we abstract from this algebraic specification and reason directly over finite sets of values. This abstraction is sufficient for our results, since Theorem~\ref{thm:trace} (Trace Correspondence) depends only on observable actions and process composition, and not on the internal algebraic structure of data types.

    \subsection{Operational Semantics}\label{sec:operationalSemantic}
    
        In order to present Bach's operational semantics we use labelled transition systems like other papers in the field~\cite{barkallahsocio}.  We use configurations of the form of $\langle A | \sigma \rangle$, where $A$ is an agent  (i.e., an instance of \textit{ag} as defined in Equation~\ref{eq:agents}) and $\sigma$ is the shared space, modelled as a multiset of si-terms. Transitions represent computation steps, with rules for primitives and operators derived from ACP~\cite{bergstra1985algebra} and Linda-like semantics~\cite{carriero1989linda, gelernter1985generative}.

        The operational semantics of Bach distinguishes two complementary categories of language constructs: those that produce \emph{observable actions} on the shared space, and those that govern the \emph{structural composition} of agents. We make this distinction formal below, together with the notion of a \emph{terminated agent} representing the result of a successful execution.

        \begin{definition}[Observable actions]\label{def:obs-actions}
        The \emph{observable actions} of Bach are the labels produced by the execution of communication primitives. They are denoted by the label $\alpha$ ranging over the four possible labels: $t^+$ for \texttt{tell(t)}, $t^-$ for \texttt{get(t)}, $t^?$ for \texttt{ask(t)}, and $t^{\sim}$ for \texttt{nask(t)}, where $t$ is a si-term.
        \end{definition}

        \begin{definition}[Structural constructs]\label{def:struct-constructs}
        The \emph{structural constructs} of Bach are the operators that compose agents without producing observable actions on their own: sequential composition ($\,;\,$), parallel composition ($\parallel$), non-deterministic choice ($+$), guarded choice ($\rightarrow \diamond$), quantification (\texttt{sum}), and procedure calls. Their behavior is defined inductively from the observable actions of their sub-agents.
        \end{definition}

        \begin{definition}[Terminated agent]\label{def:terminated} 
        We extend the syntax of agents (Equation~\ref{eq:agents}) with a distinguished constant $E$, called the \emph{terminated agent}, representing successful completion. By convention, $E$ admits no outgoing transition: for any shared space $\sigma$ and any label $\alpha$, $\langle E \mid \sigma \rangle \nrightarrow$.
        \end{definition}
        
        The transition rules of Bach are presented in two parts: rules governing the communication primitives (Section~\ref{sec:rules-primitives}), which produce observable actions, and rules governing the structural constructs (Section~\ref{sec:rules-structural}), which combine sub-agents inductively. 
        This separation is central to the trace-based translation into mCRL2 developed in Section~\ref{sec:bachTomCRL2}, where only observable actions are retained in the observable behavior of an agent.

        \subsubsection{Transitions rules for Primitives}\label{sec:rules-primitives}

            Figure~\ref{fig:bach-primitives-rules} specifies the four transition rules governing how Bach primitives interact with the shared space. The rules naturally split into two pairs: the first pair modifies the shared space, while the second pair only observes it.

            The first pair captures the primitives that \emph{modify} the shared space. Rule (T) states that \texttt{tell(t)} always succeeds and adds an occurrence of $t$ to the shared space, producing the labelled transition $t^+$. Conversely, rule (G) describes \texttt{get(t)}, which removes one occurrence of $t$ and produces the label $t^-$. This rule applies only when $t$ is already present in the shared space; otherwise, the agent blocks until another agent inserts it.

            The second pair captures the primitives that \emph{observe} the shared space without modifying it. Rule (A) describes \texttt{ask(t)}: it requires $t$ to be present, observes it without consuming it, and produces the label $t^?$. Its counterpart, rule (N), describes \texttt{nask(t)}: it has the explicit premise $t \notin \sigma$ — that is, it requires the \emph{absence} of $t$ — and produces the label $t^{\sim}$.

            \begin{figure}[!t]
            \centering
                \[
                \begin{array}{ll}
                    (T) \quad 
                    \langle \texttt{tell(t)} \mid \sigma \rangle \xrightarrow{t^+} \langle E \mid \sigma \cup \{t\} \rangle \qquad
                    &
                    (A) \quad
                    \langle \texttt{ask(t)} \mid \sigma \cup \{t\} \rangle \xrightarrow{t^?} \langle E \mid \sigma \cup \{t\} \rangle 
                    \\[12pt]
                    (G) \quad
                    \langle \texttt{get(t)} \mid \sigma \cup \{t\} \rangle \xrightarrow{t^-} \langle E \mid \sigma \rangle
                    &
                    (N) \quad
                    \dfrac{t \notin \sigma}{\langle \texttt{nask(t)} \mid \sigma \rangle \xrightarrow{t^{\sim}} \langle E \mid \sigma \rangle}
                \end{array}
                \]
            \caption{Transition rules for the primitives in Bach~\cite{barkallahsocio}.}
            \label{fig:bach-primitives-rules}
            \end{figure}

        \subsubsection{Transition Rules for Structural Constructs}\label{sec:rules-structural}

            \begin{figure}[!t]
            \centering
            \large
            \[
            \begin{array}{ll}
            (S)\ \ \ \ \quad \frac{\langle A \mid \sigma \rangle \xrightarrow{\alpha} \langle A' \mid \sigma' \rangle}{\langle A;B \mid \sigma \rangle \xrightarrow{\alpha} \langle A';B \mid \sigma' \rangle}\qquad\qquad\qquad
            &
            (P)\ \ \ \ \quad \frac{\langle A \mid \sigma \rangle \xrightarrow{\alpha} \langle A' \mid \sigma' \rangle}
            {\substack{\langle A \mid\mid B \mid \sigma \rangle \xrightarrow{\alpha} \langle A' \mid\mid B \mid \sigma' \rangle \\
            \langle B \mid\mid A \mid \sigma \rangle \xrightarrow{\alpha} \langle B \mid\mid A' \mid \sigma' \rangle}}\\[30pt]
            (C)\ \ \ \ \quad \frac{\langle A \mid \sigma \rangle \xrightarrow{\alpha} \langle A' \mid \sigma' \rangle}
            {\substack{\langle A + B \mid \sigma \rangle \xrightarrow{\alpha} \langle A' \mid \sigma' \rangle\\
            \langle B + A \mid \sigma \rangle \xrightarrow{\alpha} \langle A' \mid \sigma' \rangle}}
            &
            (Co)\ \ \quad \frac{\vDash C,\; \langle A \mid \sigma \rangle \xrightarrow{\alpha} \langle A' \mid \sigma' \rangle}
            {\substack{\langle C \rightarrow A \diamond B \mid \sigma \rangle \xrightarrow{\alpha} \langle A' \mid \sigma' \rangle\\
            \langle ! C \to B \diamond A \mid \sigma \rangle \xrightarrow{\alpha} \langle A' \mid \sigma' \rangle}}\\[30pt]
            (Pc)\ \  \quad \frac{P(\bar{x}) = A,\; \langle A[\bar{x}/\bar{u}] \mid \sigma \rangle \xrightarrow{\alpha} \langle A' \mid \sigma' \rangle}{\langle P(\bar{u}) \mid \sigma \rangle \xrightarrow{\alpha} \langle A' \mid \sigma' \rangle}
            &
            (Sum)\quad \frac{ d \in D,\; \langle A[x/d] \mid \sigma \rangle \xrightarrow{\alpha} \langle A' \mid \sigma' \rangle}{\langle sum\ x\ in\ D:\ A \mid \sigma \rangle \xrightarrow{\alpha} \langle A' \mid \sigma' \rangle}
            \end{array}
            \]
            \caption{Operational semantics rules for the structural constructs of Bach. All rules except (Sum) follow \cite{barkallahsocio}; (Sum) is introduced in this work.}
            \label{fig:bach-composition}
            \end{figure}

            The structural constructs of Bach define how agents are composed and how control flows during execution. Their operational semantics follow an interleaving model inspired by ACP and is given in Figure~\ref{fig:bach-composition}. Rule (S) handles sequential composition $A;B$: the first agent $A$ executes, and the continuation $B$ remains pending. Rule (P) handles parallel composition $A \parallel B$: either side may make a transition, yielding interleaved executions. Rule (C) handles non-deterministic choice $A + B$: either branch may evolve, with the unselected branch discarded. Rule (Co) handles guarded choice $C \rightarrow A \diamond B$: when the condition $C$ holds, $A$ is executed; when it does not, $B$ is. Rule (Pc) handles procedure calls $P(\bar{u})$: the call is unfolded by substituting the arguments into the procedure body $A$. Rule (Sum) handles the quantification operator $\texttt{sum}\ x\ \texttt{in}\ D:\ A$: any value $d \in D$ may be selected, instantiating $A$ accordingly.

            \begin{remark} [On the multi-conclusion formulation]
            The rules (P), (C), and (Co) are presented with two symmetric conclusions to capture the commutativity of $\parallel$, $+$, and $\rightarrow \diamond$ respectively. An alternative formulation would introduce a structural congruence relation $\equiv$ on agents (e.g., $A \parallel B \equiv B \parallel A$) together with single-conclusion rules. 
            \end{remark}

    \subsection{Trace Semantics of Bach}

        \begin{definition}[Trace semantics of Bach]\label{def:bach_sem}
        We define the observational semantics of a Bach agent as the function
        $
        \mathcal{O}_{B} : \textit{Agent} \rightarrow \mathcal{P}(\textit{Trace})
        $,
        where a trace is a finite sequence of observable actions.
        We denote by $\mathcal{P}(\textit{Trace})$ the power set of all such traces, i.e., the set of all possible (non-deterministic) execution behaviors of a given agent. For a Bach agent $A$, we define $\mathcal{O}_{B}(A)$ as follows:
        \[
        \mathcal{O}_{B}(A) =
        \left\{
        \alpha_1 \cdots \alpha_n \;\middle|\;
        \langle A, \emptyset \rangle
        \xrightarrow{\alpha_1}
        \langle A_1, \sigma_1 \rangle
        \xrightarrow{\alpha_2}
        \cdots
        \xrightarrow{\alpha_n}
        \langle A_n, \sigma_n \rangle \nrightarrow
        \right\}.
        \]
        \end{definition}

\section{Bach-to-mCRL2 Translation and Associated Tools}\label{sec:bachTomCRL2}

    This section describes the mapping from Bach to mCRL2 and introduces an automated translation tool. The resulting encoding provides a basis for subsequent analysis in mCRL2, including the verification of temporal properties using the $\mu$-calculus, which is developed later in this paper.

    \subsection{Overview of mCRL2}

        mCRL2~\cite{groote2014modeling} is a process algebra-based formalism and toolset designed for modeling and verifying concurrent systems. 
        Sharing its roots in the Algebra of Communicating Processes (ACP) with Bach, mCRL2 is particularly well-suited for this translation. Its distinguishing feature is the integration of the modal $\mu$-calculus, a highly expressive logic that enables automated verification of complex behavioral properties.
    
        The mCRL2 toolset provides simulation (\texttt{simulator}), state space generation (\texttt{lps2lts}), and property verification (\texttt{pbes2bool}) using $\mu$-calculus proofs. These features allow a more comprehensive analysis of Bach models, enabling developers to explore agent interactions and verify system properties beyond what is possible with the current Anemone tool.

    \subsection{Operational Semantics of the mCRL2 Fragment}

        To establish the correctness of our translation, we first introduce the operational semantics of the relevant fragment of mCRL2, adapted from~\cite{groote2014modeling}.
        Similar to Bach, we represent system behavior using labelled transition systems (LTS), where transitions are labelled by actions $\alpha$, and $E$ denotes the terminated agent (following the convention introduced in Definition~\ref{def:terminated}).
        Figure~\ref{fig:mcrl-composition} presents the rules for the fragment we use.

            \begin{figure}[!t]
            \centering
            \large
            \[
            \begin{array}{ll}
            (T_{mCRL2})  \ \ \quad a \xrightarrow{a} E \qquad \qquad \qquad
            &
            (Al_{mCRL2}) \ \ \quad
            \frac{A \xrightarrow{\alpha} A' \quad \alpha \in S}
                 {\texttt{allow}(S, A) \xrightarrow{\alpha} \texttt{allow}(S, A')}\\[12pt]
            (S_{mCRL2}) \ \ \quad \frac{A \xrightarrow{\alpha} A'}{A.B \xrightarrow{\alpha} A'.B} \qquad \qquad
            &
            (C_{mCRL2}) \ \ \ \ \quad \frac{A \xrightarrow{\alpha} A'}
            {\substack{A + B \xrightarrow{\alpha} A' \\
            B + A \xrightarrow{\alpha} A'}} \\[26pt]
            (P_{mCRL2}^{sync}) \quad  
            \frac{A \xrightarrow{\alpha} A' \qquad B \xrightarrow{\beta} B'}
            {A \parallel B \xrightarrow{\gamma(\alpha,\beta)} A' \parallel B'}
            \qquad \qquad
            &
            (P_{mCRL2}^{L/R}) \quad  
            \frac{A \xrightarrow{\alpha} A' \qquad B \xrightarrow{\beta} B' }
            {\substack{
            A || B \xrightarrow{\alpha} A' || B \\
            A || B \xrightarrow{\beta} A || B'
            }} \\[26pt]
            (Co_{mCRL2}) \ \ \quad \frac{\vDash C, \; A \xrightarrow{\alpha} A'}
            {\substack{C \rightarrow A \diamond B \xrightarrow{\alpha} A' \\
            ! C \to B \diamond A \xrightarrow{\alpha} A'}} \qquad \qquad
            &
            (Pc_{mCRL2}) \quad \frac{ P(\bar{x}) = A \quad A[\bar{x}/\bar{u}] \xrightarrow{\alpha} A'}{ P(\bar{u}) \xrightarrow{\alpha} A'}\\[24pt]
            (Sum_{mCRL2}) \quad\frac{ d \in D \quad A[x/d] \xrightarrow{\alpha} A'}{ sum\ x:D.\ A \xrightarrow{\alpha} A'}
            
            \end{array}
            \]
            \caption{Operational semantics rules for the relevant fragment of mCRL2, adapted from~\cite{groote2014modeling}.}
            \label{fig:mcrl-composition}
            \end{figure}

        Rule $(T_{mCRL2})$ gives the basic action behaviour: an action $a$ executes and reaches the terminated state $E$. Rules $(S_{mCRL2})$, $(C_{mCRL2})$, $(Co_{mCRL2})$, $(Pc_{mCRL2})$, and $(Sum_{mCRL2})$ handle respectively sequential composition, non-deterministic choice, guarded choice $C \rightarrow A \diamond B$, process calls, and quantification operator, with semantics analogous to their Bach counterparts. Two notable differences with respect to Bach are nonetheless worth pointing out, in addition to minor syntactic variations (sequential composition uses the dot operator $A . B$ instead of the semicolon, and quantification follows the syntax $sum\ x:D.\ A$ instead of $sum\ x\ in\ D:\ A$).

        The first difference concerns parallel composition. In addition to independent interleavings — captured by $(P^{L/R}_{mCRL2})$ as in Bach — the rule $(P^{sync}_{mCRL2})$ allows two parallel components to synchronise their actions according to a communication function $\gamma$, which maps pairs of synchronisable actions to a combined action. The function $\gamma$ is defined explicitly via the \texttt{comm} construct of mCRL2 (see Figure~\ref{fig:blackboard}); when $\gamma(\alpha, \beta)$ is undefined, no synchronisation is possible and the rule does not apply.
        This synchronisation mechanism is specific to mCRL2 and is central to our translation, as we will use it to model the interaction between agents and the Blackboard (Section~\ref{sec:blackboardImplementation}).

        The second difference is the \texttt{allow} operator, governed by rule $(Al_{mCRL2})$. It does not merely hide actions outside the set $S$ from external observation: actions not in $S$ are entirely \emph{prevented from occurring}. In other words, $\texttt{allow}(S, A)$ behaves like $A$ except that any transition labelled with an action $\alpha \notin S$ is blocked at the source. This operator plays a key role in defining the observable semantics of the translated system.

    \subsection{Trace Semantics of mCRL2}
    
        \begin{definition}[Trace semantics of mCRL2]
        We define the observational semantics of an mCRL2 agent as the function
        $
        \mathcal{O}_{M} : \textit{Agent} \rightarrow \mathcal{P}(\textit{Trace})
        $,
        where a trace is a finite sequence of observable actions. For an mCRL2 agent $A$,  we define $\mathcal{O}_{M}(A)$ as follows:
        \[
        \mathcal{O}_{M}(A) =
        \left\{
        \alpha_1 \cdots \alpha_n \;\middle|\;
        A
        \xrightarrow{\alpha_1}
        A_1
        \xrightarrow{\alpha_2}
        \cdots
        \xrightarrow{\alpha_n}
        A_n
         \nrightarrow
        \right\}.
        \]
        \end{definition}

    \subsection{Implementation of the Shared Space in mCRL2}\label{sec:blackboardImplementation}

        In a Bach system, a set of agents $A_1, A_2, \ldots, A_n$ execute in parallel and interact exclusively through the shared space. They can be summarized as a single composed agent $A = A_1 \parallel A_2 \parallel \cdots \parallel A_n$. To faithfully translate this into mCRL2, we implement the shared space as an explicit parallel process, called the Blackboard. This design, illustrated in Figure~\ref{fig:blackboard}, separates agent behavior from state management, preserving the decoupled coordination semantics of Bach.
        The initial system is defined as the parallel composition of the Blackboard process and the translated agent $A$, which corresponds to the mCRL2 encoding of the composed Bach agent $A_1 \parallel \cdots \parallel A_n$. The Blackboard is initialized with an empty multiset, denoted \texttt{\{:\}}, which in mCRL2 represents the empty shared space $\emptyset$.

    \begin{figure}[!t]
        \centering
        \begin{lstlisting}[basicstyle=\footnotesize\ttfamily]
proc Blackboard(myBB: BB) = 
        sum x: Elem.
            tb(x) . Blackboard(myBB + {x:1})
        + sum x: Elem.
            (count(x, myBB) > 0) -> gb(x) . Blackboard(myBB - {x:1})
        + sum x: Elem.
            (count(x, myBB) > 0) -> ab(x) . Blackboard(myBB)
        + sum x: Elem.
            (count(x, myBB) == 0) -> nab(x) . Blackboard(myBB);
    
init allow({Told, Got, Asked, Nasked},
    comm({
        tb | tell -> Told, gb | get  -> Got,
        ab | ask  -> Asked,  nab | nask -> Nasked
        },
        Blackboard({:}) || A
    ));
        \end{lstlisting}
        \caption{Explicit implementation of a shared space in mCRL2}
        \label{fig:blackboard}
    \end{figure}

        Each primitive in Bach (\texttt{tell}, \texttt{get}, \texttt{ask}, \texttt{nask}) is translated into an action emitted by the agent process. These actions synchronize with corresponding internal Blackboard actions, producing observable events:
    
        \begin{itemize}
            \item \texttt{tell(t)}: synchronizes with Blackboard action \texttt{tb(t)} and performs observable action \texttt{Told(t)}
            \item \texttt{get(t)}: synchronizes with \texttt{gb(t)} and performs observable action \texttt{Got(t)}
            \item \texttt{ask(t)}: synchronizes with \texttt{ab(t)} and performs observable action \texttt{Asked(t)}
            \item \texttt{nask(t)}: synchronizes with \texttt{nab(t)} and performs observable action \texttt{Nasked(t)}
        \end{itemize}

        \noindent This interaction is modelled in mCRL2 through a synchronization mechanism between the agent and the Blackboard, followed by a restriction of the resulting action set to the intended observable behavior (via the \textit{allow} structure). The Blackboard maintains the multiset of si-terms internally and updates it according to agent actions, while the synchronized actions serve as the observable interface for verification.

        \subsection{Trace Semantics Revisited}
            To support the inductive proof of the trace correspondence theorem, we introduce a generalized version of the Bach trace semantics that takes the initial shared space as an explicit parameter.
            
            \begin{definition}[Generalized Trace Semantics of Bach, $\mathcal{O}'$]
                Let $A$ be a Bach agent and $\sigma$ a shared space. We define:
                \[
                \mathcal{O}'(A)(\sigma) =
                \left\{
                \alpha_1 \cdots \alpha_n \;\middle|\;
                \langle A \mid \sigma \rangle
                \xrightarrow{\alpha_1}
                \langle A_1 \mid \sigma_1 \rangle
                \xrightarrow{\alpha_2}
                \cdots
                \xrightarrow{\alpha_n}
                \langle A_n \mid \sigma_n \rangle \nrightarrow
                \right\}
                \]
                where the condition $\langle A_n \mid \sigma_n \rangle \nrightarrow$  includes the case $n = 0$, i.e., when $\langle A \mid \sigma \rangle \nrightarrow$ itself, yielding the empty trace $\epsilon$.

                Note that the original trace semantics $\mathcal{O}_B(A)$ defined in Definition~\ref{def:bach_sem} 
                corresponds to the special case $\mathcal{O}'(A)(\emptyset)$.
            \end{definition}
            
            \begin{definition}[Recursive Trace Semantics of Bach, $\mathcal{O}''$]
                Let $A$ be a Bach agent and $\sigma$ a shared space. We define $\mathcal{O}''(A)(\sigma)$ recursively as:
                \[
                \mathcal{O}''(A)(\sigma) = 
                \left\{ \epsilon   \;\middle|\;  \langle A \mid \sigma \rangle \nrightarrow \right\}
                \cup
                \left\{
                \alpha_1 \cdot h \;\middle|\;
                \langle A \mid \sigma \rangle
                \xrightarrow{\alpha_1}
                \langle A_1 \mid \sigma_1 \rangle
                \text{ and }
                h \in \mathcal{O}''(A_1)(\sigma_1)
                \right\}
                \]
                where $\epsilon$ denotes the empty trace. The term $\{ \epsilon \}$ serves as the base case of the recursion: it is trivially included, corresponding to the case $\langle A \mid \sigma \rangle \nrightarrow$, just as in $\mathcal{O}'$ where the empty trace arises implicitly when $n = 0$.
            \end{definition}
            
            \begin{proposition}[Equivalence of Trace Semantics] \label{prop:equiv}
                For any Bach agent $A$ and shared space $\sigma$:
                \[
                \mathcal{O}'(A)(\sigma) = \mathcal{O}''(A)(\sigma)
                \]
            \end{proposition}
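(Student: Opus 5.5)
Since $\mathcal{O}''$ is recursive and $\mathcal{O}'$ is given explicitly, I will first make precise how $\mathcal{O}''$ is to be read. I take it as the least fixed point of the defining equation, or equivalently as an inductive definition generated by two rules. The first rule puts $\epsilon$ into $\mathcal{O}''(A)(\sigma)$ whenever $\langle A \mid \sigma\rangle \nrightarrow$. The second rule states that if $\langle A \mid \sigma\rangle \xrightarrow{\alpha_1} \langle A_1 \mid \sigma_1\rangle$ and $h \in \mathcal{O}''(A_1)(\sigma_1)$, then $\alpha_1\cdot h \in \mathcal{O}''(A)(\sigma)$. Under this reading, membership of a trace is witnessed by a finite derivation, so both inclusions can be shown by induction on the length $n$ of the trace, quantifying universally over all agents $A$ and shared spaces $\sigma$. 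I fix this reading explicitly because under a greatest-fixed-point reading the statement would still hold for finite traces. Only the finite witnesses matter, and traces are finite sequences by Definition~\ref{def:bach_sem}.

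For $\mathcal{O}'(A)(\sigma) \subseteq \mathcal{O}''(A)(\sigma)$, I argue by induction on $n$. If $n=0$, then $\langle A\mid\sigma\rangle\nrightarrow$, and the first rule gives $\epsilon\in\mathcal{O}''(A)(\sigma)$. If $n>0$, split the witnessing run as a first step $\langle A\mid\sigma\rangle\xrightarrow{\alpha_1}\langle A_1\mid\sigma_1\rangle$ followed by the remaining run of length $n-1$, which ends in a stuck configuration. The remaining run shows $\alpha_2\cdots\alpha_n\in\mathcal{O}'(A_1)(\sigma_1)$. The induction hypothesis, applied to $A_1$ and $\sigma_1$, gives membership in $\mathcal{O}''(A_1)(\sigma_1)$, and the second rule then yields $\alpha_1\alpha_2\cdots\alpha_n \in \mathcal{O}''(A)(\sigma)$.

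For the converse inclusion, I again induct on the length of the trace, or equivalently on the derivation. If the empty trace was produced by the first rule, then $\langle A\mid\sigma\rangle\nrightarrow$, and the run of length $0$ witnesses $\epsilon\in\mathcal{O}'(A)(\sigma)$. Otherwise the trace has the form $\alpha_1\cdot h$, with a step $\langle A\mid\sigma\rangle\xrightarrow{\alpha_1}\langle A_1\mid\sigma_1\rangle$ and $h\in\mathcal{O}''(A_1)(\sigma_1)$. By the induction hypothesis $h\in\mathcal{O}'(A_1)(\sigma_1)$, so a run from $\langle A_1\mid\sigma_1\rangle$ labelled by $h$ ends in a stuck configuration. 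Prefixing the step labelled $\alpha_1$ gives a run from $\langle A\mid\sigma\rangle$ labelled by $\alpha_1\cdot h$ that ends in a stuck configuration.

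One case needs care, and it is the main obstacle, although it is definitional rather than technical. The empty trace can arise only from the first rule, since every trace produced by the second rule has length at least one. The explanatory text after the definition of $\mathcal{O}''$ calls $\{\epsilon\}$ ``trivially included''. If it were read as unconditionally included, the equality would fail for any $\langle A\mid\sigma\rangle$ that can move, because $\epsilon\notin\mathcal{O}'(A)(\sigma)$ in that case. The proof therefore takes the displayed definition, with its side condition $\langle A\mid\sigma\rangle\nrightarrow$, as authoritative, and I will note this explicitly in the proof. Beyond that point, both inclusions are routine concatenation and splitting of runs.
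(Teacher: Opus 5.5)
Your proof is correct and is the same approach as the paper, which only writes ``Simple verification'': an induction on trace length, quantified over all configurations, that splits a run at its first step to get $\mathcal{O}' \subseteq \mathcal{O}''$ and prefixes a step to get the converse. Your insistence that the $\epsilon$-clause carries the side condition $\langle A \mid \sigma \rangle \nrightarrow$ is a correct and useful clarification of the paper's loose ``trivially included'' wording, as is your observation that, because traces are finite and the recursion strictly shortens them, any fixed-point reading of $\mathcal{O}''$ gives the same set.
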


            \begin{proof}
            Simple verification.
            \end{proof}

    \subsection{Translation Functions Between Trace Semantics}

        To formally relate the traces of a Bach agent to those of its mCRL2 translation, we introduce two functions that map observable actions between the two formalisms.
        
        \begin{definition}[Bach-to-mCRL2 action translation, $b2m$]
            The function $b2m$ maps Bach observable actions to mCRL2 observable actions:
            
            \[
            \begin{array}{lcl@{\qquad\qquad}lcl}
                b2m(t^+) &=& \texttt{Told}(t) & b2m(t^-) &=& \texttt{Got}(t)\\
                b2m(t^?) &=& \texttt{Asked}(t) & b2m(t^{\sim}) &=& \texttt{Nasked}(t) 
            \end{array}
            \]
            It extends to traces as follows:
            $
                b2m(\alpha_1 \cdots \alpha_n) = b2m(\alpha_1) \cdots b2m(\alpha_n)
            $.
        \end{definition}
        
        \begin{definition}[mCRL2-to-Bach action translation, $m2b$]
            The function $m2b$ maps mCRL2 observable actions to Bach observable actions:

            \[
            \begin{array}{lcl@{\qquad\qquad}lcl}
                m2b(\texttt{Told}(t)) &=& t^+ & m2b(\texttt{Got}(t)) &=& t^-\\
                m2b(\texttt{Asked}(t)) &=&  t^?  & m2b(\texttt{Nasked}(t)) &=& t^{\sim}
            \end{array}
            \]
            It extends to traces as follows:
            $
                m2b(\beta_1 \cdots \beta_n) = m2b(\beta_1) \cdots m2b(\beta_n)
            $.
        \end{definition}
        
        \noindent For conciseness, we introduce the following notation for the translated system:
        
        \begin{definition}[Translated system, $\mathit{Act_{tr}}$]\label{def:acttr}
            Let $A$ be a Bach agent. We denote by $\mathit{Act_{tr}}(A)$ the mCRL2 process:
           \begin{eqnarray*}
            \mathit{Act_{tr}}(A) & = & \texttt{allow}(\{\texttt{Told}, \texttt{Got}, \texttt{Asked}, \texttt{Nasked}\}, \\ & & \hspace*{1ex}  
                \texttt{comm}(\{tb \mid tell \rightarrow \texttt{Told},\;
                    gb \mid get \rightarrow \texttt{Got},\;
                    ab \mid ask \rightarrow \texttt{Asked},\;
                    nab \mid nask \rightarrow \texttt{Nasked}\},\; \\ & & \hspace*{2ex}  
                \textit{BB}({:}) \parallel tr(A)))
            \end{eqnarray*}
            where $tr(A)$ is the mCRL2 translation of the Bach agent $A$, and $\textit{BB}(\{:\})$ is the Blackboard initialized with the empty shared space.
        \end{definition}
        The function $tr$ is defined by structural induction on the grammar of agents (Equation~\ref{eq:agents}), mapping each Bach primitive to its corresponding mCRL2 action and each composition operator to its mCRL2 counterpart.

        \begin{theorem}[Trace Correspondence]\label{thm:trace}
            Let $A$ be a Bach agent and $\sigma$ a shared space. Then:
            \begin{enumerate}
                \item For any trace $\alpha_1 \cdots \alpha_n \in \mathcal{O}'(A)(\sigma)$, there exists a trace $\beta_1 \cdots \beta_n \in \mathcal{O}_M(\mathit{Act_{tr}}(A))$ such that:
                \[
                    \alpha_1 \cdots \alpha_n = m2b(\beta_1 \cdots \beta_n)
                \]
                \item For any trace $\beta_1 \cdots \beta_n \in \mathcal{O}_M(\mathit{Act_{tr}}(A))$, there exists a trace $\alpha_1 \cdots \alpha_n \in \mathcal{O}'(A)(\sigma)$ such that:
                \[
                    \beta_1 \cdots \beta_n = b2m(\alpha_1 \cdots \alpha_n)
                \]
            \end{enumerate}
        \end{theorem}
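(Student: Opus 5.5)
The plan is to reduce the trace statement to a one-step correspondence between configurations, and then lift it to traces by induction on trace length using the recursive characterization $\mathcal{O}''$ (Proposition~\ref{prop:equiv}).

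First, the statement quantifies over an arbitrary shared space $\sigma$, but $\mathit{Act_{tr}}(A)$ fixes the Blackboard at $\{:\}$. I would therefore read the theorem through a parametrized system $\mathit{Act}^{\sigma}_{tr}(A)$, defined exactly as in Definition~\ref{def:acttr} but with $\textit{BB}(\sigma)$ in place of $\textit{BB}(\{:\})$. The claim as literally written is then the instance $\sigma=\emptyset$, and the induction must be carried out for arbitrary $\sigma$. I also need to fix $tr(E)$ to be the terminated mCRL2 process. Residuals are compared up to the usual structural identities of mCRL2: $E.B = B$, $E \parallel B = B$, and associativity and commutativity of $\parallel$.

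The central lemma is a step correspondence, proved by induction on the derivation of a transition. For every $A$, $\sigma$ and Bach label $\alpha$:
\[
\langle A \mid \sigma\rangle \xrightarrow{\alpha} \langle A' \mid \sigma'\rangle
\quad\text{iff}\quad
\mathit{Act}^{\sigma}_{tr}(A) \xrightarrow{b2m(\alpha)} \mathit{Act}^{\sigma'}_{tr}(A').
\]
Moreover, every transition of $\mathit{Act}^{\sigma}_{tr}(A)$ has this form, i.e.\ its label lies in the image of $b2m$. The proof splits into two groups of cases.

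\begin{itemize}
\item \emph{Primitive cases.} Take $\texttt{tell}(t)$ as an example. The only transitions of $\textit{BB}(\sigma)\parallel tell(t)$ that survive \texttt{allow} are synchronizations $\gamma(tb(t),tell(t))=\texttt{Told}(t)$. Rule $(Al_{mCRL2})$ blocks a lone $tell$ or a lone $tb$. The guards in the Blackboard ($count>0$, $count==0$) coincide exactly with the side conditions of rules (G), (A) and (N), and the Blackboard's multiset update matches $\sigma'$.
\item \emph{Structural cases.} For (S), (C), (Co), (Pc) and (Sum), the mCRL2 rules mirror the Bach rules one-to-one, using that $tr$ commutes with substitution.
\end{itemize}

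A corollary of the lemma is that $\langle A\mid\sigma\rangle\nrightarrow$ iff $\mathit{Act}^{\sigma}_{tr}(A)\nrightarrow$.

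I expect the parallel case to be the main obstacle. Here the mCRL2 term is $\textit{BB}(\sigma)\parallel(tr(A)\parallel tr(B))$, and I would argue in three steps.
\begin{enumerate}
\item Every allowed transition is a synchronization of the Blackboard with exactly one of $tr(A)$ or $tr(B)$. The reason is that $\gamma$ is only defined on pairs (Blackboard action, agent action), so $tr(A)$ and $tr(B)$ cannot synchronize with each other, and a lone interleaving step is blocked by \texttt{allow}.
\item Using associativity and commutativity of $\parallel$, such a synchronization step is exactly a step of $\textit{BB}(\sigma)\parallel tr(A)$ (or $\textit{BB}(\sigma)\parallel tr(B)$) with the other component untouched. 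The induction hypothesis then applies, and the resulting step corresponds to rule (P).
\item The residual is $tr(A')\parallel tr(B)$, which is again in the image of $tr$ up to the structural identities. This closure property, i.e.\ that the set of states $\{\mathit{Act}^{\sigma}_{tr}(A)\}$ is closed under transitions, is what makes the induction go through. It is also where care is needed with $E$-residuals in sequential and parallel composition.
\end{enumerate}

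Finally, the trace statement follows by induction on $n$ using $\mathcal{O}''$ and the equality $\mathcal{O}'=\mathcal{O}''$. For $n=0$, the deadlock corollary gives $\epsilon$ on both sides. For the inductive step, decompose $\alpha_1\cdot h$, apply the step lemma to get $\beta_1=b2m(\alpha_1)$ with the matching successor, and use the induction hypothesis on $h$ from $\langle A_1\mid\sigma_1\rangle$. Part 2 is symmetric, using the "every transition has this form" clause of the lemma. In both parts the inverse relation follows from $m2b\circ b2m=\mathrm{id}$ and $b2m\circ m2b=\mathrm{id}$.
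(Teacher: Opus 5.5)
Your proposal follows the same route as the paper (induction on trace length through $\mathcal{O}''$, peeling off the first transition and applying the induction hypothesis, with the converse by symmetry). It also proves what the paper only asserts, namely the one-step correspondence lemma by rule induction, and it generalizes to a Blackboard $\textit{BB}(\sigma)$. That generalization is genuinely needed: with $\textit{BB}(\{:\})$ fixed, the statement fails for non-empty $\sigma$ (take $A=\texttt{get}(t)$, $\sigma=\{t\}$), and the paper's appeal to the induction hypothesis at $\langle A_1\mid\sigma_1\rangle$ does not match the actual mCRL2 residual.

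One caution: the identity $E.B=B$ you adopt for mCRL2 must have its counterpart $E;B=B$ imposed on the Bach side as well. Under rule (S) as given, $\langle E;B\mid\sigma\rangle$ has no transition, so otherwise your step lemma already fails for $\texttt{tell}(a);\texttt{tell}(b)$.
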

        
        \begin{proof}

             Both directions are proved by induction on the length of the trace, using the recursive characterization $\mathcal{O}''$ established in Proposition~\ref{prop:equiv}, which allows us to decompose any trace into its first action and a continuation.

            \medskip
            \textsc{Direction 1 ($\mathcal{O}' \rightarrow \mathcal{O}_M$).}
            We prove that for any trace $\alpha_1 \cdots \alpha_n \in \mathcal{O}'(A)(\sigma)$, the trace $b2m(\alpha_1 \cdots \alpha_n)$ belongs to $\mathcal{O}_M(\mathit{Act_{tr}}(A))$.
    
            \textit{Base case.} If $\alpha_1 \cdots \alpha_n = \epsilon$, then $\langle A \mid \sigma \rangle \nrightarrow$, and correspondingly $\mathit{Act_{tr}}(A)$ cannot perform any observable action, so $\epsilon \in \mathcal{O}_M(\mathit{Act_{tr}}(A))$.

            \textit{Inductive step.} Assume the property holds for all traces of length $n$. Consider a trace $\alpha_1 \cdots \alpha_{n+1} \in \mathcal{O}''(A)(\sigma)$. By definition of $\mathcal{O}''$, there 
            exists a transition:
            \[
                \langle A \mid \sigma \rangle \xrightarrow{\alpha_1} 
                \langle A_1 \mid \sigma_1 \rangle
            \]
            with $\alpha_2 \cdots \alpha_{n+1} \in \mathcal{O}''(A_1)(\sigma_1)$. The translation of this primitive or structural step produces a corresponding transition in $\mathit{Act_{tr}}(A)$ with label $b2m(\alpha_1)$. By the induction hypothesis applied to $A_1$, the suffix $\alpha_2 \cdots \alpha_{n+1}$ maps to a valid mCRL2 trace. Hence $b2m(\alpha_1 \cdots \alpha_{n+1}) \in \mathcal{O}_M(\mathit{Act_{tr}}(A))$.

            \medskip
            \textsc{Direction 2 ($\mathcal{O}_M \rightarrow \mathcal{O}'$).}
            Symmetric to Direction 1, replacing $b2m$ with $m2b$ and $\mathcal{O}''$ with $\mathcal{O}_M$.

    \end{proof}

\section{Verification of Temporal Properties}\label{sec:verif}

    This section presents the verification framework that bridges Bach's shared-space-oriented temporal logic and mCRL2's $\mu$-calculus. We first introduce the temporal operators supported by Anemone, then introduce the relevant fragment of the $\mu$-calculus used in mCRL2, and finally present the systematic translation of Bach temporal properties into $\mu$-calculus formulas.

  \subsection{Temporal Logic in Bach and Anemone}

         The Anemone model checker~\cite{BaJa-Anemone-21} supports a set of temporal operators oriented towards the shared space contents, making them natural for Bach programs. The basic building block is the predicate $\#x = n$, which checks whether the number of occurrences of si-term $x$ in the shared space equals $n$. This predicate can be generalized using comparison operators ($=$, $\neq$, $<$, $\leq$, $>$, $\geq$) on si-term occurrences, and further combined using logical connectives (conjunction (\texttt{\&}), disjunction (\texttt{|}), and negation (\texttt{!})) to form composed predicates such as \texttt{\#a=1 \& \#b>=2}.

        To express richer properties, three core temporal operators are provided:

        \begin{itemize}
            \item \textbf{Reach $\phi$}: verifies that there exists an execution path reaching a state where $\phi$ holds in the shared space. For example, \texttt{Reach \#a=1} checks whether si-term $a$ can eventually be present.
            \item \textbf{Next $\phi$}: verifies that $\phi$ holds in the immediate next state. For example, \texttt{Next \#a=1} checks whether $a$ appears in the shared space after the very next action.
            \item \textbf{$\phi_1$ Until $\phi_2$}: verifies that $\phi_1$ holds continuously until $\phi_2$ becomes true, and that $\phi_2$ eventually holds. For example, \texttt{(\#a=0) Until (\#b=1)} asserts that $a$ is absent until $b$ appears.
        \end{itemize}

        \noindent While this grammar is naturally si-term-oriented, it suffers from significant limitations. The \texttt{Reach} operator only proves the existence of a path where a property holds, not its universal establishment across all executions. Furthermore, multiple consecutive \texttt{Next} operators and combined \texttt{Reach} and \texttt{Until} are not permitted. These restrictions make the grammar incapable of verifying complex properties such as liveness or universal invariants, motivating the translation to mCRL2.

    \subsection{The $\mu$-Calculus Fragment of mCRL2}

        The mCRL2 toolset uses the modal $\mu$-calculus as its property specification language. Starting from a Linear Process Specification (LPS), the model checker translates a $\mu$-calculus formula into a Parameterized Boolean Equation System (PBES), whose solution determines whether the property holds across all reachable states~\cite{groote2009analysis}. Whenever \texttt{pbes2bool} returns an answer, it corresponds to a rigorous proof, although termination is not guaranteed for infinite or unbounded state spaces.

        The basic modal operators are:
        \begin{itemize}
            \item $[a]\phi$: in every execution of action $a$, property $\phi$ must hold (\emph{universal modality}).
            \item $\langle a \rangle \phi$: there exists at least one execution of action $a$ after which $\phi$ holds (\emph{existential modality}).
            \item $a^*$ / $a^+$: zero or more / one or more repetitions of $a$.
            \item $a_1\,.\,a_2$: sequential composition of action formulae inside a modality, so that $\langle a_1\,.\,a_2 \rangle \phi$ is equivalent to $\langle a_1 \rangle \langle a_2 \rangle \phi$.
        \end{itemize}

        \noindent These operators combine with logical connectives (conjunction \texttt{\&\&}, disjunction \texttt{||}, negation \texttt{!}) and quantifiers (\texttt{forall}, \texttt{exists}). The $\mu$-calculus further provides two fixpoint operators, optionally parameterised: $mu X(p : T = v).\ \phi$ and $nu X(p : T = v).\ \phi$ denote respectively the \emph{least} and \emph{greatest} fixpoint, where the parameter $p$ of type $T$ initialised to $v$ carries state across recursive unfoldings. The least fixpoint expresses liveness (\emph{eventually}), the greatest fixpoint expresses safety (\emph{invariantly}). Boolean expressions are lifted into modal formulae via $\texttt{val}(\phi)$, which holds iff $\phi$ evaluates to true.

        For instance, $nu X.\ ([\mathit{true}]X\ \&\&\ \langle \mathit{true}\rangle \mathit{true})$ expresses deadlock freedom — from any reachable state, at least one action is always enabled — while $mu X.\ ([\mathit{true}]X\ \lvert\rvert\ \langle \mathit{true}^*\ .\ \texttt{Told}(t) \rangle \mathit{true})$ expresses the eventual insertion of a si-term $t$ into the shared space.

    \subsection{Mapping Temporal Properties}

        The Bach temporal operators \texttt{Reach}, \texttt{Next}, and \texttt{Until} translate systematically into $\mu$-calculus formulas, as summarized in Figure~\ref{fig:translations}. These translations apply to any formula $\phi$, whether it concerns coordination actions or shared space contents.
        
        \begin{figure}[h]
        \centering
        \[
        \begin{array}{lcl}
            \texttt{Reach}\ \phi 
            & \longrightarrow 
            & \langle \mathit{true}^* \rangle\, \phi \\[8pt]
            \texttt{Next}\ \phi 
            & \longrightarrow 
            & \langle \mathit{true} \rangle\, \phi \\[8pt]
            \phi_1\ \texttt{Until}\ \phi_2 
            & \longrightarrow 
            & mu X.\ \big(\, \phi_2\ \lvert  \rvert\ (\phi_1 \ \&\&\ \langle 
            \mathit{true} \rangle X)\,\big)
        \end{array}
        \]
        \caption{Translation of Bach temporal operators into $\mu$-calculus.}
        \label{fig:translations}
        \end{figure}

        \texttt{Reach}\ $\phi$ holds if there exists a reachable state 
        satisfying $\phi$. \texttt{Next}\ $\phi$ holds if $\phi$ is satisfied 
        in some immediate successor state. $\phi_1\ \texttt{Until}\ \phi_2$ 
        holds if either $\phi_2$ already holds, or $\phi_1$ holds and the 
        property continues in the next state — the least fixpoint ensuring 
        that $\phi_2$ is eventually reached.

        \paragraph{Content-Aware Verification.}
            A key contribution of this work is the extension of the above translations to properties that directly concern the \emph{contents of the shared space}.  This multiset verification dimension appears to be absent from the mCRL2 literature at the time of the research.
             To support such properties, we extend the Blackboard process with  additional observable actions, triggered whenever specific conditions on the shared space contents are satisfied. For instance, the predicates $\#a = 1$ and $\#b \geq 2$ are encoded as follows:

            \begin{lstlisting}[basicstyle=\footnotesize\ttfamily]
 proc Blackboard(myBB: BB) = 
      ...
    + sum x: Elem. (count(a, myBB) == 1) -> BB_a_equal_1 . Blackboard(myBB)
    + sum x: Elem. (count(b, myBB) >= 2) -> BB_b_sup_2 . Blackboard(myBB);
            \end{lstlisting}

            \noindent When $\phi$ is a predicate over the shared space contents, it is encoded as an observable action prefixed with \texttt{BB\_}, emitted by the Blackboard whenever the condition holds. The corresponding state formula in the $\mu$-calculus is therefore $\langle \texttt{BB\_}\phi \rangle\ \mathit{true}$, which asserts that the action $\texttt{BB\_}\phi$ is available in the current state — and thus, by construction, that $\phi$ holds. Applying this convention to the general translations of Figure~\ref{fig:translations} yields:
            \[
            \begin{array}{lcl}
                \texttt{Reach}\ \#a = 1 
                & \longrightarrow 
                & \langle \mathit{true}^* \rangle\ \langle\texttt{BB\_a\_equal\_1}\rangle\ \mathit{true}\\[8pt]
                \texttt{Next}\ \#a = 1 
                & \longrightarrow 
                & \langle \mathit{true} \rangle\ \langle\texttt{BB\_a\_equal\_1}\rangle\ \mathit{true} \\[8pt]
                (\#a=1)\ \texttt{Until}\ (\#b \geq 2)
                & \longrightarrow 
                & mu X.\ \big( \langle\texttt{BB\_b\_sup\_2} \rangle \mathit{true}\  \lvert  \rvert\ 
                (\langle \texttt{BB\_a\_equal\_1} \rangle \mathit{true}\ \&\& \
                \langle \mathit{true} \rangle X)\big)
            \end{array}
            \]

\section{Case Study: Synchronous Load Balancer in a Microservice Architecture}\label{sec:usecase}

    \subsection{Motivation}

        Microservice architectures are, by design, distributed systems in which different services may depend on one another. A common coordination challenge in such architectures is the fair distribution of incoming requests across multiple service replicas, in order to avoid overloading a single instance while others remain idle. The Load Balancer pattern addresses this challenge by acting as an intermediary that routes requests to available consumers according to a scheduling strategy.

        In this case study, we model a synchronous Load Balancer using a round-robin routing strategy. Each request undergoes a complete processing cycle: the producer initiates the communication, the load balancer forwards the request to a consumer, the consumer responds, the load balancer forwards the response back, and the producer finalizes  the cycle. This synchronous request-response pattern is representative of typical production scenarios, for instance in payment service architectures where a response is required to confirm transaction success before initiating subsequent operations such as order confirmation or shipping.
        We choose the Load Balancer for its conciseness and expressiveness: its round-robin routing and synchronous request-response cycle provide enough behavioral richness to illustrate content-aware verification, while remaining fully presentable within this paper. More complex patterns (Circuit Breaker, Message Broker) are provided as exploratory examples in the associated repository~\cite{repo}.

    \subsection{Bach Model}

\begin{figure}[!t]
        \centering
        \begin{lstlisting}[basicstyle=\small\ttfamily]
proc
    
ProducerAgent(producerId: Service, reqId: MyInteger, 
                                                maxReqId: MyInteger) =
        tell(request(producerId, loadBalancer, reqId));
        sum consumerId in Service:
            get(response(loadBalancer, producerId, consumerId, reqId));
            tell(success(consumerId, producerId, reqId));
            (!(toInt(reqId) = toInt(maxReqId))) -> 
                ProducerAgent(producerId, successor(reqId), maxReqId).


LoadBalancer(roundRobinIndex: MyInteger) =
        sum  producerId in Service, consumerId in Service, 
                                                   reqId in MyInteger:
            (get(request(producerId, loadBalancer, reqId));
            
            (
                (mod2(roundRobinIndex) = zero) ->
                tell(request(loadBalancer, czero, producerId, reqId));
                LoadBalancer(successor(roundRobinIndex))
            <>
                (mod2(roundRobinIndex)= one) ->
                tell(request(loadBalancer, cone, producerId, reqId));
                LoadBalancer(successor(roundRobinIndex))
            ))
            +
            get(response(consumerId, loadBalancer, producerId, reqId));
            tell(response(loadBalancer, producerId, consumerId, reqId));
            LoadBalancer(roundRobinIndex).

ConsumerAgent(consumerId: Service) =
    sum producerId in Service, reqId in MyInteger :
      get(request(loadBalancer, consumerId, producerId, reqId));
      tell(response(consumerId, loadBalancer, producerId, reqId));
      ConsumerAgent(consumerId).
        \end{lstlisting}
        \caption{Bach procedures of the synchronous Load Balancer case study. The full specification including data definitions is available in~\cite{repo}.}
        \label{fig:lb-bach}
        \end{figure}

\begin{sloppypar}
  The system is initialised as the parallel composition of four agents: $\texttt{ProducerAgent(pzero, zero, two)}$, $\texttt{LoadBalancer(zero)}$, $\texttt{ConsumerAgent(czero)}$, and $\texttt{ConsumerAgent(cone)}$. The producer emits three requests (indexed \texttt{zero} to \texttt{two}) and the load balancer starts at an even round-robin index. The model scales naturally to more producers, consumers, or requests by adjusting the domain definitions. The three agent procedures are given in Figure~\ref{fig:lb-bach}.
\end{sloppypar}

        The case study uses a service domain \texttt{Service = \{pzero, loadBalancer, czero, cone\}}, a bounded integer type \texttt{MyInteger = \{zero, one, two\}}, and arithmetic helpers: \texttt{toInt}, \texttt{successor}, and \texttt{mod2}, whose equations are omitted for brevity. To track the routing of requests, messages follow the structure \texttt{message(source, destination, context, id)}, where \texttt{context} preserves the original sender or final recipient throughout forwarding. The load balancer alternates between consumers based on the parity of \texttt{roundRobinIndex} (even $\to$ \texttt{czero}, odd $\to$ \texttt{cone}).

        \begin{remark}  
        \texttt{successor} is defined cyclically with \texttt{successor(two) = zero}, ensuring that the load balancer's recursive call $\texttt{LoadBalancer(successor(roundRobinIndex))}$ remains well-defined when \texttt{roundRobinIndex} reaches \texttt{two}. The wraparound has no observable effect: after this final increment, the producer has already stopped emitting requests (its guard $(\texttt{reqId} \neq \texttt{maxReqId})$ halts emission once \texttt{maxReqId} is reached), so the load balancer never processes another request and the wrapped index is never used. Consistency between the bound of \texttt{MyInteger} and \texttt{maxReqId} is therefore an important modelling invariant when scaling to larger configurations.
        \end{remark}

    \subsection{Generated mCRL2 Model}

        The Bach specification is automatically translated into mCRL2 by our tool following the framework described in Section~\ref{sec:bachTomCRL2}. The generated code instantiates the Blackboard pattern of Figure~\ref{fig:blackboard} with the Load Balancer procedures, wrapped in the standard \texttt{allow/comm} synchronization block. The full mCRL2 specification is available in the accompanying repository~\cite{repo}.

    \subsection{Verification of the Load Balancer Properties}

        Our automated translation tool fully automates the translation of properties expressible in Anemone — using \texttt{Reach}, \texttt{Next}, \texttt{Until} together with simple counting predicates of the form \texttt{\#a = c} — by generating both the corresponding $\mu$-calculus formula and the Blackboard instrumentation.

        The properties of this section, however, exceed Anemone's expressiveness. Their action-based formulations were written manually in $\mu$-calculus. Their content-aware formulations were either adapted from automatically generated templates and manually edited (No Duplication), or written from scratch while reusing the Blackboard instrumentation patterns produced by the tool for simpler predicates (Absence of Loss) — for example, the single-counter pattern \texttt{\#a = c} provides the structural template from which composed predicates such as \texttt{\#a + \#b > 1} are derived. The tool's contribution thus extends beyond fully automatic translation: it also provides reusable patterns that reduce the manual effort required for richer properties.

        \paragraph{No Duplication of Requests.}
            This property ensures that a given request identifier is never processed twice by the system.
            
            \noindent\textit{Action-based formulation.}
            The property is expressed by asserting that no execution path contains two consecutive occurrences of the same request identifier:
                
                \begin{lstlisting}[mathescape=true,basicstyle=\footnotesize\ttfamily]
forall reqId: MyInteger.
    !<true* . Told(request(pzero,loadBalancer,reqId)) .
        true* . Told(request(pzero,loadBalancer,reqId))> true
                \end{lstlisting}

            \noindent\textit{Content-aware formulation.}
             To express the property using the Blackboard's content, a natural first attempt is to extend the Blackboard with an observable action triggered when two occurrences of the same request are simultaneously present:
                \begin{lstlisting}[basicstyle=\footnotesize\ttfamily]
... sum id: MyInteger. 
        (count(request(pzero,loadBalancer,id),myBB) > 1) 
            -> BB_dup_req . Blackboard(myBB)
                \end{lstlisting}
            However, this approach has a fundamental limitation. The Blackboard reflects the \emph{current state} of the shared space at a given moment $t$, not its full history. A request could be emitted a first time, consumed by the LoadBalancer via \texttt{get}, and then emitted again — constituting a genuine duplication — yet the count in the Blackboard would never exceed 1 at any single instant. The property would therefore incorrectly evaluate to \texttt{true}. 
            A more robust approach exploits the fact that, in this model, every request eventually leads to a success si-term (since no interference is assumed), and crucially, these terms are never consumed. Duplication can therefore be reliably detected at the level of success si-term, which accumulate in the shared space:
                    \begin{lstlisting}[basicstyle=\footnotesize\ttfamily]
... sum id: MyInteger.
        (count(success(czero,pzero,id),myBB) +
        count(success(cone,pzero,id),myBB) > 1)
            -> BB_dup_req . Blackboard(myBB)
                    \end{lstlisting}

            This yields the following formula, valid under the assumption that each request leads to exactly one success — which holds in this interference-free model but may not generalize to models with interference:

                    \[
                    ! \langle \mathit{true}^*\ .\ \texttt{BB\_dup\_req} \rangle\, \mathit{true}
                    \]

    \paragraph{Cyclic Distribution of Requests.}

        This property ensures that requests are distributed in a strict round-robin fashion between the two consumers. It is naturally expressed using the action-based approach, as it relies on the \emph{order} in which forwarding actions occur rather than on the contents of the shared space at a given moment. The Blackboard emits observable actions whenever a condition holds on its current contents, independently of when the corresponding data was inserted. Verifying alternation therefore requires tracking the \emph{sequence} of observable forwarding actions, which is precisely what the action-based $\mu$-calculus excels at.

\begin{figure}[!h]
            \begin{minipage}{\linewidth}
            \begin{lstlisting}[mathescape=true, basicstyle=\footnotesize\ttfamily]
nu X(last: Service = cone) .(
    [!(exists id: MyInteger.
        Told(request(loadBalancer, czero, id)) ||
        Told(request(loadBalancer, cone, id))
    )] X(last)
    &&
    (forall id: MyInteger.
        [Told(request(loadBalancer, czero, id))]
        (val(last == cone) && X(czero))
    )
    &&
    (forall id: MyInteger.
        [Told(request(loadBalancer, cone, id))]
        (val(last == czero) && X(cone))
    ))
            \end{lstlisting}
            \end{minipage}
\caption{$\mu$-calculus formula for the Cyclic Distribution property.\label{figure-nu}}
\end{figure}

        As shown in Figure~\ref{figure-nu}, the formula uses a parameterised fixpoint $nu X(\texttt{last})$ whose parameter records the consumer targeted by the last forwarding action, with initial value \texttt{cone} so that the very first forwarding is constrained to go to \texttt{czero}. The formula combines three conjuncts. The first states that any action which is \emph{not} a forwarding to a consumer preserves the invariant without changing \texttt{last}. The second states that any forwarding to \texttt{czero} requires the previous forwarding to have been towards \texttt{cone}, and updates \texttt{last} to \texttt{czero}. The third is symmetric: any forwarding to \texttt{cone} requires the previous forwarding to have been towards \texttt{czero}, and updates \texttt{last} accordingly. Together, these three clauses enforce strict alternation along every execution.

        This example illustrates a general principle: content-aware verification via Blackboard predicates is most effective for \emph{state-based} properties (e.g., presence or count of data), while action-based reasoning is more suited to \emph{ordering} and \emph{sequencing} properties. The two approaches are therefore complementary.

        \paragraph{Absence of Loss.}

            This property ensures that no request sent by the producer is blocked before a successful response
    
            \noindent\textit{Action-based formulation.} The action-based formulation is given in Figure~\ref{fig-nu-2}. The outer $nu X$ enforces the property as a global invariant. Upon each request, the inner $\nu Y$ checks that the system never reaches a deadlock before a success is observed, the modality $\langle \mathit{true} \rangle \mathit{true}$ guaranteeing that some action always remains enabled:

\begin{figure}[!h]
            \begin{minipage}{\linewidth}
            \begin{lstlisting}[mathescape=true,basicstyle=\footnotesize\ttfamily]
forall reqId: MyInteger.
    nu X. (
        [true] X
        &&
        [Told(request(pzero, loadBalancer, reqId))]
            nu Y. (
                [!Told(success(czero, pzero, reqId))
                && !Told(success(cone,  pzero, reqId))] Y
                && <true> true
            )
    )
            \end{lstlisting}
            \end{minipage}
            \caption{Action-based formulation of the Absence of Loss property.\label{fig-nu-2}}
            \end{figure}

        \noindent\textit{Content-aware formulation.} \label{AL_BB} We extend the Blackboard with an observable action emitted whenever exactly one success message for a given request is present:
        \begin{lstlisting}[mathescape=true,basicstyle=\footnotesize\ttfamily]
... sum id: MyInteger.
        (count(success(czero, pzero, id), myBB) +
         count(success(cone,  pzero, id), myBB) == 1)
            -> BB_success(id) . Blackboard(myBB)
        \end{lstlisting}

        The condition \texttt{== 1} simultaneously encodes two guarantees: the request has been processed (\texttt{count} $\geq 1$) and has not been duplicated (\texttt{count} $\leq 1$). Building on this instrumentation, the content-aware formulation replaces the pair of \texttt{Told(success(...))} observations of the action-based version by a single Blackboard predicate \texttt{BB\_success(reqId)}, which folds the two consumer cases into one signal:

            \begin{minipage}{\linewidth}
            \begin{lstlisting}[mathescape=true, basicstyle=\footnotesize\ttfamily]
forall reqId: MyInteger.
    nu X. (
        [true] X
        &&
        [Told(request(pzero, loadBalancer, reqId))]
        nu Y. ( [!BB_success(reqId)] Y && <true>true )
    )
            \end{lstlisting}
            \end{minipage}

    \subsection{Detecting Violations}\label{sec:violation}
        Beyond confirming expected properties, the framework also helps a developer pinpoint genuine bugs in the model. To illustrate this, suppose the round-robin logic of the load balancer is accidentally broken — for example, by hard-coding every forwarding to \texttt{czero}. Verifying the cyclic distribution property on this faulty variant returns \texttt{false}, together with a counter-example trace whose forwarding actions are:
\begin{lstlisting}[mathescape=true, basicstyle=\footnotesize\ttfamily]
... Got(request(pzero, loadBalancer, zero))
Told(request(loadBalancer, czero, pzero, zero))
... Got(request(pzero, loadBalancer, one))
Told(request(loadBalancer, czero, pzero, one))   <- violates alternation
\end{lstlisting}
        The trace is expressed entirely in the Bach vocabulary preserved by our translation (Section~\ref{sec:bachTomCRL2}): the actions \texttt{Told}, \texttt{Got}, and the si-terms appear exactly as they were written in the original Bach program. Filtering on forwarding actions reveals the violation directly: two consecutive routings to \texttt{czero} indicate the broken alternation, without any need to inspect the $\mu$-calculus formula. 
        The cognitive cost of $\mu$-calculus is thus concentrated at specification time, when properties are first written; once the property is in place, diagnosing a violation typically reduces to reading a sequence of familiar Bach actions, which is comparable to inspecting a regular program trace and requires little to no $\mu$-calculus expertise.

\section{Conclusion and Discussion}\label{sec:conclusion}

    This paper presents a framework for the automated verification of Bach coordination programs using mCRL2. Starting from a formal translation of Bach into mCRL2, we establish a trace correspondence theorem guaranteeing the observable behavior of a Bach agent is faithfully preserved in its mCRL2 encoding. Building on this foundation, we introduce a content-aware verification approach that extends the standard action-based $\mu$-calculus with predicates directly expressed over the shared space contents, by instrumenting the Blackboard process with dedicated observable actions.

    Both contributions are supported by an accompanying tool. The translation interface lets the user enter a Bach program in a text area, click \emph{Translate}, and obtain the corresponding mCRL2 specification, with a side panel summarising Bach syntax and coordination primitives for reference. The proof translation interface accepts a Bach temporal property as input and automatically generates the four mCRL2 artifacts required for content-aware verification: the action declaration (\texttt{act}), the Blackboard branch (\texttt{proc Blackboard}), the \texttt{allow} update, and the $\mu$-calculus formula. Both interfaces are available in the accompanying repository~\cite{repo}.

    The Load Balancer case study illustrates the complementary aspects of the two verification styles. Action-based formulations naturally capture \emph{ordering} and \emph{history-based} properties — such as cyclic distribution, where alternation between consumers must be tracked over time. Content-aware formulations excel at \emph{state-based} properties over persistent data — such as absence of loss, where accumulated success messages provide a reliable witness. The no-duplication property further illustrates that naive content-aware encodings may fail when si-terms are consumed, but can be recovered by shifting observation to persistent effects. A useful guideline emerges: content-aware verification is most effective for state-based properties over persistent data, while history-based properties are more naturally captured through action-based reasoning.

    Beyond verifying expected properties, the framework also supports the practical task of \emph{detecting errors in the model}. When a property fails, the mCRL2 model checker returns a counter-example trace expressed entirely in the Bach vocabulary preserved by our encoding — the original primitives (\texttt{Told}, \texttt{Got}, \texttt{Asked}, \texttt{Nasked}) and the original si-terms. As shown on the broken round-robin variant (Section~\ref{sec:violation}), a Bach developer can often localise the fault directly from the trace, without inspecting the $\mu$-calculus formula. The cognitive cost of $\mu$-calculus is thus concentrated at \emph{specification time}, when temporal properties are first written, while \emph{diagnosis time} remains close to reading a familiar Bach execution.

    However, the content-aware approach has inherent limitations. Instrumenting the Blackboard introduces additional observable actions and increases the state space, which may impact verification performance on larger models. Furthermore, some content-aware encodings rely on  model-specific assumptions — such as the absence of interference — that may not hold in more general settings. Such instrumentation should therefore be applied with care.

    \paragraph{Future Work.}
    Several directions remain open. On the theoretical side, the trace correspondence established in this paper can be strengthened by investigating whether a weak bisimulation holds between a Bach agent and its mCRL2 encoding, which would provide a finer behavioral characterization of the translation.
    On the practical side, while this paper introduced a verification approach based on a shared-space model, its empirical exploration remains limited, and a more systematic evaluation of its strengths and limitations on richer coordination patterns remains an open challenge. In particular, applying the framework to more complex patterns such as the Circuit Breaker and the Message Broker would be a natural extension of this work.
    A third direction concerns the expressiveness of the temporal language itself: extending Anemone's operators and our automated translation to richer patterns would reduce the manual effort currently required for properties beyond Anemone's expressiveness, as illustrated by some of the Load Balancer properties of Section~\ref{sec:usecase}.
    Finally, although it addresses a coordination language of a different nature, named Reo, which relies on a control-based approach as opposed to the data-based approach used by Bach, the article~\cite{Kokash-et-al} proposes other techniques for translating service compositions into mCRL2. A comparison between these techniques and those developed in the present paper would be worthwhile, particularly in light of the relationships already highlighted in the articles~\cite{JA-LI-DD-Fetschrift,KrauseMLA11}.

\section{Acknowledgment}

The authors thank the University of Namur for its support. They also
thank the Walloon Region for partial support through the Ariac project
(convention 210235) and the CyberExcellence project (convention 2110186).

\bibliographystyle{eptcs}
\bibliography{bibi}

\begin{thebibliography}{10}
\providecommand{\bibitemdeclare}[2]{}
\providecommand{\surnamestart}{}
\providecommand{\surnameend}{}
\providecommand{\urlprefix}{Available at }
\providecommand{\url}[1]{\texttt{#1}}
\providecommand{\href}[2]{\texttt{#2}}
\providecommand{\urlalt}[2]{\href{#1}{#2}}
\providecommand{\doi}[1]{doi:\urlalt{https://doi.org/#1}{#1}}
\providecommand{\eprint}[1]{arXiv:\urlalt{https://arxiv.org/abs/#1}{#1}}
\providecommand{\bibinfo}[2]{#2}

\bibitemdeclare{inproceedings}{BaJa-ICE23}
\bibitem{BaJa-ICE23}
\bibinfo{author}{M.~\surnamestart Barkallah\surnameend} \&
  \bibinfo{author}{J.-M. \surnamestart Jacquet\surnameend}
  (\bibinfo{year}{2023}): \emph{\bibinfo{title}{{On the Introduction of Guarded
  Lists in Bach: Expressiveness, Correctness, and Efficiency Issues}}}.
\newblock In \bibinfo{editor}{C.~\surnamestart Aubert\surnameend},
  \bibinfo{editor}{C.~Di \surnamestart Giusto\surnameend},
  \bibinfo{editor}{S.~\surnamestart Fowler\surnameend} \&
  \bibinfo{editor}{L.~\surnamestart Safina\surnameend}, editors: {\slshape
  \bibinfo{booktitle}{Proceedings 16th Interaction and Concurrency Experience
  (ICE) 2023}}, {\slshape \bibinfo{series}{{EPTCS}}} \bibinfo{volume}{383}, pp.
  \bibinfo{pages}{55--72}, \doi{10.4204/EPTCS.383.4}.

\bibitemdeclare{article}{barkallahsocio}
\bibitem{barkallahsocio}
\bibinfo{author}{Manel \surnamestart Barkallah\surnameend}:
  \emph{\bibinfo{title}{{On Reasoning about Socio-Technical Systems: the
  Multi-Bach Coordination Model and its Workbench Anemone}}}.

\bibitemdeclare{article}{bergstra1985algebra}
\bibitem{bergstra1985algebra}
\bibinfo{author}{Jan~A. \surnamestart Bergstra\surnameend} \&
  \bibinfo{author}{Jan~Willem \surnamestart Klop\surnameend}
  (\bibinfo{year}{1985}): \emph{\bibinfo{title}{Algebra of communicating
  processes with abstraction}}.
\newblock {\slshape \bibinfo{journal}{Theoretical computer science}}
  \bibinfo{volume}{37}, pp. \bibinfo{pages}{77--121},
  \doi{10.1016/0304-3975(85)90088-X}.

\bibitemdeclare{article}{carriero1989linda}
\bibitem{carriero1989linda}
\bibinfo{author}{Nicholas \surnamestart Carriero\surnameend} \&
  \bibinfo{author}{David \surnamestart Gelernter\surnameend}
  (\bibinfo{year}{1989}): \emph{\bibinfo{title}{Linda in context}}.
\newblock {\slshape \bibinfo{journal}{Communications of the ACM}}
  \bibinfo{volume}{32}(\bibinfo{number}{4}), pp. \bibinfo{pages}{444--458}.

\bibitemdeclare{inproceedings}{DJL18}
\bibitem{DJL18}
\bibinfo{author}{D.~\surnamestart Darquennes\surnameend},
  \bibinfo{author}{J.-M. \surnamestart Jacquet\surnameend} \&
  \bibinfo{author}{I.~\surnamestart Linden\surnameend} (\bibinfo{year}{2018}):
  \emph{\bibinfo{title}{{On Multiplicities in Tuple-Based Coordination
  Languages: The Bach Family of Languages and Its Expressiveness Study}}}.
\newblock In \bibinfo{editor}{G.~Di~Marzo \surnamestart Serugendo\surnameend}
  \& \bibinfo{editor}{M.~\surnamestart Loreti\surnameend}, editors: {\slshape
  \bibinfo{booktitle}{Proceedings of the 20th International Conference on
  Coordination Models and Languages}}, {\slshape \bibinfo{series}{Lecture Notes
  in Computer Science}} \bibinfo{volume}{10852}, \bibinfo{publisher}{Springer},
  pp. \bibinfo{pages}{81--109}, \doi{10.1007/978-3-319-92408-3\_4}.

\bibitemdeclare{article}{de2002klaim}
\bibitem{de2002klaim}
\bibinfo{author}{Rocco \surnamestart De~Nicola\surnameend},
  \bibinfo{author}{Gian-Luigi \surnamestart Ferrari\surnameend} \&
  \bibinfo{author}{Rosario \surnamestart Pugliese\surnameend}
  (\bibinfo{year}{2002}): \emph{\bibinfo{title}{KLAIM: A kernel language for
  agents interaction and mobility}}.
\newblock {\slshape \bibinfo{journal}{IEEE Transactions on software
  engineering}} \bibinfo{volume}{24}(\bibinfo{number}{5}), pp.
  \bibinfo{pages}{315--330}, \doi{10.1109/32.685256}.

\bibitemdeclare{inproceedings}{de2005formal}
\bibitem{de2005formal}
\bibinfo{author}{Rocco \surnamestart De~Nicola\surnameend},
  \bibinfo{author}{Diego \surnamestart Latella\surnameend} \&
  \bibinfo{author}{Mieke \surnamestart Massink\surnameend}
  (\bibinfo{year}{2005}): \emph{\bibinfo{title}{Formal modeling and
  quantitative analysis of KLAIM-based mobile systems}}.
\newblock In: {\slshape \bibinfo{booktitle}{Proceedings of the 2005 ACM
  symposium on Applied computing}}, pp. \bibinfo{pages}{428--435},
  \doi{10.1145/1066677.1066777}.

\bibitemdeclare{article}{gelernter1992coordination}
\bibitem{gelernter1992coordination}
\bibinfo{author}{David \surnamestart Gelernter\surnameend} \&
  \bibinfo{author}{Nicholas \surnamestart Carriero\surnameend}
  (\bibinfo{year}{1992}): \emph{\bibinfo{title}{Coordination languages and
  their significance}}.
\newblock {\slshape \bibinfo{journal}{Communications of the ACM}}
  \bibinfo{volume}{35}(\bibinfo{number}{2}), p.~\bibinfo{pages}{96},
  \doi{10.1145/129630.376083}.

\bibitemdeclare{article}{gelernter1985generative}
\bibitem{gelernter1985generative}
\bibinfo{author}{\surnamestart {Gelernter, David}\surnameend}
  (\bibinfo{year}{1985}): \emph{\bibinfo{title}{{Generative communication in
  Linda}}}.
\newblock {\slshape \bibinfo{journal}{{ACM Transactions on Programming
  Languages and Systems (TOPLAS)}}} \bibinfo{volume}{7}(\bibinfo{number}{1}),
  pp. \bibinfo{pages}{80--112}, \doi{10.1145/2363.2433}.

\bibitemdeclare{book}{groote2014modeling}
\bibitem{groote2014modeling}
\bibinfo{author}{Jan~Friso \surnamestart Groote\surnameend} \&
  \bibinfo{author}{Mohammad~Reza \surnamestart Mousavi\surnameend}
  (\bibinfo{year}{2014}): \emph{\bibinfo{title}{Modeling and analysis of
  communicating systems}}.
\newblock \bibinfo{publisher}{MIT press}, \doi{10.7551/mitpress/9946.003.0010}.

\bibitemdeclare{article}{groote2009analysis}
\bibitem{groote2009analysis}
\bibinfo{author}{\surnamestart {Groote, Jan Friso and Mathijssen, Aad and
  Reniers, Michel A and Usenko, Yaroslav S and van Weerdenburg,
  Muck}\surnameend} (\bibinfo{year}{2009}): \emph{\bibinfo{title}{{Analysis of
  distributed systems with mCRL2}}}.
\newblock {\slshape \bibinfo{journal}{{Process Algebra for Parallel and
  Distributed Processing}}}.

\bibitemdeclare{article}{BaJa-Anemone-21}
\bibitem{BaJa-Anemone-21}
\bibinfo{author}{J.-M. \surnamestart Jacquet\surnameend} \&
  \bibinfo{author}{M.~\surnamestart Barkallah\surnameend}
  (\bibinfo{year}{2021}): \emph{\bibinfo{title}{{Anemone: {A} workbench for the
  Multi-Bach coordination language}}}.
\newblock {\slshape \bibinfo{journal}{Science of Computer Programming}}
  \bibinfo{volume}{202}, p. \bibinfo{pages}{102579},
  \doi{10.1016/J.SCICO.2020.102579}.

\bibitemdeclare{inproceedings}{JL07}
\bibitem{JL07}
\bibinfo{author}{J.-M. \surnamestart Jacquet\surnameend} \&
  \bibinfo{author}{I.~\surnamestart Linden\surnameend} (\bibinfo{year}{2007}):
  \emph{\bibinfo{title}{Coordinating {C}ontext-aware {A}pplications in {M}obile
  {A}d-hoc {N}etworks}}.
\newblock In \bibinfo{editor}{T.~\surnamestart Braun\surnameend},
  \bibinfo{editor}{D.~\surnamestart Konstantas\surnameend},
  \bibinfo{editor}{S.~\surnamestart Mascolo\surnameend} \&
  \bibinfo{editor}{M.~\surnamestart Wulff\surnameend}, editors: {\slshape
  \bibinfo{booktitle}{Proceedings of the first ERCIM workshop on eMobility}},
  \bibinfo{publisher}{The University of Bern}, pp. \bibinfo{pages}{107--118}.

\bibitemdeclare{inproceedings}{JA-LI-DD-Fetschrift}
\bibitem{JA-LI-DD-Fetschrift}
\bibinfo{author}{J.{-}M. \surnamestart Jacquet\surnameend},
  \bibinfo{author}{I.~\surnamestart Linden\surnameend} \&
  \bibinfo{author}{D.~\surnamestart Darquennes\surnameend}
  (\bibinfo{year}{2018}): \emph{\bibinfo{title}{{On the Relation Between
  Control-Based and Data-Based Coordination Languages}}}.
\newblock In \bibinfo{editor}{F.S. \surnamestart de~Boer\surnameend},
  \bibinfo{editor}{M.M. \surnamestart Bonsangue\surnameend} \&
  \bibinfo{editor}{J.~\surnamestart Rutten\surnameend}, editors: {\slshape
  \bibinfo{booktitle}{{It's All About Coordination - Essays to Celebrate the
  Lifelong Scientific Achievements of Farhad Arbab}}}, \bibinfo{series}{Lecture
  Notes in Computer Science}, \bibinfo{publisher}{Springer}, pp.
  \bibinfo{pages}{86--106}, \doi{10.1007/978-3-319-90089-6_7}.

\bibitemdeclare{inproceedings}{Kokash-et-al}
\bibitem{Kokash-et-al}
\bibinfo{author}{N.~\surnamestart Kokash\surnameend},
  \bibinfo{author}{C.~\surnamestart Krause\surnameend} \& \bibinfo{author}{E.P.
  \surnamestart de~Vink\surnameend} (\bibinfo{year}{2010}):
  \emph{\bibinfo{title}{{Data-aware Design and Verification of Service
  Compositions with Reo and mCRL2}}}.
\newblock In \bibinfo{editor}{S.Y. \surnamestart Shin\surnameend},
  \bibinfo{editor}{S.~\surnamestart Ossowski\surnameend},
  \bibinfo{editor}{M.~\surnamestart Schumacher\surnameend},
  \bibinfo{editor}{M.J. \surnamestart Palakal\surnameend} \&
  \bibinfo{editor}{C.{-}C. \surnamestart Hung\surnameend}, editors: {\slshape
  \bibinfo{booktitle}{Proceedings of the 2010 {ACM} Symposium on Applied
  Computing (SAC)}}, \bibinfo{publisher}{{ACM}}, pp.
  \bibinfo{pages}{2406--2413}, \doi{10.1145/1774088.1774590}.

\bibitemdeclare{article}{KrauseMLA11}
\bibitem{KrauseMLA11}
\bibinfo{author}{C.~\surnamestart Krause\surnameend},
  \bibinfo{author}{Z.~\surnamestart Maraikar\surnameend},
  \bibinfo{author}{A.~\surnamestart Lazovik\surnameend} \&
  \bibinfo{author}{F.~\surnamestart Arbab\surnameend} (\bibinfo{year}{2011}):
  \emph{\bibinfo{title}{{Modeling Dynamic Reconfigurations in Reo using
  High-level Replacement Systems}}}.
\newblock {\slshape \bibinfo{journal}{Science of Computer Programming}}
  \bibinfo{volume}{76}(\bibinfo{number}{1}), pp. \bibinfo{pages}{23--36},
  \doi{10.1016/j.scico.2009.10.006}.

\bibitemdeclare{}{reutersAWS2025}
\bibitem{reutersAWS2025}
\bibinfo{author}{\surnamestart {Reuters}\surnameend} (\bibinfo{year}{2025}):
  \emph{\bibinfo{title}{{Amazon says AWS cloud service back to normal after
  outage disrupts businesses worldwide}}}.
\newblock
  \urlprefix\url{https://www.reuters.com/business/retail-consumer/amazons-cloud-unit-reports-outage-several-websites-down-2025-10-20/}.
\newblock \bibinfo{note}{Accessed: 27 Dec. 2025}.

\bibitemdeclare{misc}{repo}
\bibitem{repo}
\bibinfo{author}{C.~\surnamestart Reuther\surnameend} \& \bibinfo{author}{J.-M.
  \surnamestart Jacquet\surnameend} (\bibinfo{year}{2026}):
  \emph{\bibinfo{title}{Bach-to-mCRL2 Translation Tool and Case Studies}}.
\newblock
  \bibinfo{howpublished}{\url{https://github.com/UNamurCSFaculty/anemone-bach-to-mcrl2}}.

\end{thebibliography}
\end{document}